\newtheorem{theorem}{Theorem}
\newtheorem{corollary}{Corollary}
\title{ The Area Under a Receiver Operating Characteristic Curve Over Enriched Multipath Fading Conditions}
\author{
Paschalis~C.~Sofotasios$^{1,2}$,~Mulugeta~K.~Fikadu$^{1}$,~Khuong~Ho-Van$^{3}$,~Mikko~Valkama$^{1}$~and~George~K.~Karagiannidis$^{2,4}$
\\\\
\begin{normalsize} 
$^{1}$Department of Electronics and Communications Engineering, Tampere University of Technology, 33101 Tampere, Finland. 
\end{normalsize}
\\
\begin{normalsize} 
e-mail: $\rm \left\lbrace paschalis.sofotasios; mulugeta.fikadu; mikko.e.evalkama \right\rbrace@\rm tut.fi$ 
\end{normalsize} 
\\
\begin{normalsize} 
$^{2}$Department of Electrical and Computer Engineering, Aristotle University of Thessaloniki, 54124 Thessaloniki, Greece.
\end{normalsize}
\\
\begin{normalsize} 
e-mail:  $ \rm geokarag@\rm ieee.org$
\end{normalsize} 
\\
\begin{normalsize} 
$^{3}$Department of Electrical Engineering, HoChiMinh City University of Technology, HoChiMinh City, Vietnam. 
\end{normalsize}
\\
\begin{normalsize} 
e-mail: $\rm hvkhuong@\rm hcmut.edu.vn$
\end{normalsize} 
\\
\begin{normalsize} 
$^{4}$Department of Electrical and Computer Engineering, Khalifa University, PO Box 127788, Abu Dhabi, UAE. 
\end{normalsize} 
 }
\begin{document}
\maketitle

\begin{abstract}
This work is devoted to the analysis of the performance  of energy detection based spectrum sensing in the presence of enriched fading conditions which are distinct for the large number of multipath components and the lack of a dominant components. This type of fading conditions are characterized efficiently by the well known Nakagami${-}q$ or  Hoyt distribution and the proposed analysis is  carried out in  the context of the area under the receiver operating characteristics (ROC) curve (AUC). Unlike the widely used probability of detection metric, the AUC is a single metric and has been shown to be rather capable of evaluating the performance of a detector in applications relating to cognitive radio, radar systems and biomedical engineering, among others. Based on this, novel analytic expressions are derived for the average AUC and its complementary metric, average CAUC, for both integer and fractional values of the involved time-bandwidth product. The derived expressions have a tractable algebraic representation which renders them convenient to handle both analytically and numerically. Based on this, they are employed in analyzing the behavior of  energy detection based spectrum sensing over enriched fading conditions for different severity scenarios, which demonstrates that the performance of energy detectors is, as expected, closely related to the value of the fading parameter $q$. 
\end{abstract}

\section{Introduction}

The detection of unknown signals has been an  important research topic  over the past decades. This has been largely required in the context of radar systems and more recently in cognitive radio based communications which have attracted a significant interest by both academia and industry due to the prominent ability to increase the utilization of the currently scarce spectrum resources. The operational principle of detecting unknown signals is typically based on spectrum sensing (SS) with energy detection (ED) constituting the most simple and widely adopted method. The ED is based on the deployment of a radiometer, which is a non-coherent detection device that measures the energy level of a received signal waveform over an observation time window. The obtained measure is then compared to a pre-defined energy threshold and based on this it is determined whether an unknown signal is present or absent \cite{J:Marcum_1, J:Swerling, B:Haykin, J:Haykin, B:Bargava}.

Detection of unknown signals  over a flat band-limited Gaussian noise channel was firstly addressed  in \cite{J:Urkowitz} where  analytic expressions were proposed for the   probability of detection, $P_{d}$,  and probability of false alarm, $P_{f}$, measures. These performance metrics are  based on the statistical assumption that the decision statistics follow the  central chi-square and the non-central chi-square distributions, respectively. A few decades later, this problem was revisited in  \cite{C:Kostylev, J:Alouini} assuming quasi-deterministic signals over fading channels.  To this effect, numerous studies analyzed the performance of energy detectors under different communication and fading scenarios. Specifically, the authors in \cite{J:Alouini} derived closed-form expressions for the average probability of detection over Rayleigh, Rice and Nakagami${-}m$ fading channels for both single-channel and multi-channel scenarios.  The ED performance in the case of equal gain combining over Nakagami${-}m$ multipath fading was reported in \cite{C:Herath} while the performance in collaborative spectrum sensing and in relay-based cognitive radio networks was thoroughly investigated in \cite{C:Ghasemi, C:Sousa, J:Ghasemi-Sousa, C:Attapattu, J:Attapattu_2}. In the same context, a semi-analytic method for analyzing the performance of energy detection of unknown deterministic signals was reported in \cite{J:Herath} and is based on the moment-generating function (MGF) method. This method was utilized in diversity methods in the presence of Rayleigh, Rice and Nakagami${-}m$ fading in \cite{J:Herath} as well as  for the useful case of correlated Rayleigh and Rician fading channels in \cite{C:Beaulieu}. Finally, the detection of unknown signals in low signal-to-noise-ratio (SNR) over $K{-}$distributed ($K$), generalized $K$ ($K_{G}$) as well as in generalized multipath fading channels was analyzed in \cite{C:Atapattu_3, J:Janti, J:Attapattu, New_1, C:Attapattu_2, J:Sofotasios, New_2, C:Tellambura_2, New_3, New_3a, New_3b, New_4, J:Ghasemi, New_5, New_6}.   

It is noted here that nearly all analyses and investigations on the behavior of spectrum sensing techniques are based exclusively on the probability of detection and the  probability of false alarm performance measures. However, another performance measure that is capable of evaluating adequately the performance of energy detectors exists, the so called \textit{area under the receiver operating characteristic curve} (ROC). This performance measure is denoted as  AUC and it has been widely used in several disciplines of sciences and engineering \cite{New_7, New_8} and the reference therein. The distinct feature of this measure is that, contrary to $P_{d}$, it constitutes a single parameter measure and thus, it accounts for the overall performance of the detector in a more general manner. For example, the $P_{d}$ measure is expressed as: $i)$ a function of the instantaneous SNR,  $\gamma$, for the unfaded case  or  the average SNR, $\overline{\gamma}$, for the faded case with fixed values of $P_{f}$; $ii)$  a function of $P_{f}$ with fixed values of $\gamma$ or $\overline{\gamma}$. Contrary to that, the AUC is expressed as a function of $\gamma$ for the unfaded case or $\overline{\gamma}$ for the faded case. Therefore, the AUC  provides a more general insight  on the overall behavior and performance of a detector as a function of $\gamma$ and $\overline{\gamma}$.  Based on this, the authors in \cite{Tellambura_AUC_3, Tellambura_AUC_2, J:Tellambura_AUC, Annamalai_2, Annamalai_3}  addressed the AUC under different communication and fading scenarios. In more details, the authors in \cite{Tellambura_AUC_3, J:Tellambura_AUC, Annamalai_3} analyzed AUC in different diversity schemes and multipath fading channels. Likewise, investigation of AUC   in  amplify-and-forward relay scenarios were given in \cite{Annamalai_2} while the complementary area under the ROC curve measure was proposed in \cite{Tellambura_AUC_2}. 

It is also widely known that the last decades witnessed numerous advances in wireless channel characterization and modeling. In this context, the Nakagami${-}q$ or Hoyt distribution  has been shown to be an accurate fading model for accounting for various multipath fading scenarios. The distinct feature of Hoyt fading model is that it is capable of accounting for enriched multipath fading conditions in  the absence of any dominant component \cite{B:Alouini, Yacoub_1, Zogas, Iskander, Matalgah, J:Paris_1, J:Tavares, J:Paris_2, J:Beaulieu, Yacoub_2} - and the references therein.  Nevertheless, in spite of its usefulness it has not been explicitly considered in the context of energy detection. Motivated by this, the present work is devoted in analyzing the effect of Hoyt distributed multipath fading in the performance of energy detectors. To this end, novel analytic expressions are derived for the corresponding AUC and CAUC measures for both integer and fractional values of the involved time-bandwidth product. These expressions have a relatively simple algebraic representation which constitutes them relatively convenient to handle both analytically and numerically. With the aid of these expressions, it is shown that the performance of energy detection based spectrum sensing is rather sensitive at the severity of the fading conditions. This is particularly the case for moderate   SNR  values as even slight variations of $q$ create
 an immediate effect on the value of AUC and CAUC.

The remainder of this paper is organized as follows: The system and channel model are described in Section II. The area under the ROC curve and complementary area under the ROC curve of the energy detector over Hoyt fading channels are analyzed in Section III.  Numerical results for various communication scenarios and discussions are provided in Section IV while closing remarks are given in Section V.

\section{System and Channel Model}

\subsection{Energy Detection-Based Spectrum Sensing}

The received signal waveform in narrowband energy detection follows the standard binary hypothesis \cite[eq. (1)]{C:Beaulieu} (and the references therein),
\begin{equation} \label{Test_1} 
r(t) =
\begin{cases}
n(t) \,\,\,\, \qquad \,\,\,\, \qquad \,\,\, \,\,\,\,\,\,\,\,\,\,\,\,\,\,\,\,\,\,:H_{0} \\ 
hs(t) + n(t) \, \qquad \, \,\,\,\,\,\,\,\,\,\,\,\,\,\,\,\,\,\,\,:H_{1} 
\end{cases}
\end{equation}
where $s(t)$ is an unknown deterministic signal and  $h$, $n(t)$ denote the complex gain of the channel coefficient and   an additive white Gaussian noise (AWGN) process, respectively. The samples of $n(t)$ are assumed to be zero-mean Gaussian random variables with variance $N_{0}W$ with $W$ and $N_{0}$ denoting the  single-sided signal bandwidth and a single-sided noise power spectral density, respectively  \cite{C:Beaulieu}. The hypothesis $H_{1}$ refers to the case that an information signal is present whereas $H_{0}$ refers to the case that an   information signal is absent.  The received signal is subsequently filtered, squared and integrated over the time interval $T$ in \cite[eq. (2)]{J:Alouini}, namely, 
\begin{equation}
y \triangleq \frac{2}{N_{0}} \int_{0}^{T} \mid r(t)\mid ^{2} dt. 
\end{equation} 
The output of the integrator is a measure of the energy of the received waveform which constitutes a test statistic that determines whether the received energy measure corresponds only to the energy of noise ($H_{0}$) or to the energy of both the unknown deterministic signal and noise ($H_{1}$). By denoting the observation time bandwidth product as $u = TW$,  the test statistic follows the central chi-square distribution with $2u$ degrees of freedom under the $H_{0}$ hypothesis and the non central chi-square distribution with $2u$ degrees of freedom under the $H_{1}$ hypothesis \cite{J:Urkowitz}. To this effect, the corresponding probability density function (PDF) in the presence of AWGN is expressed according to \cite[eq. (3)]{J:Alouini}, namely, 
\begin{equation} \label{Test_3} 
p_{Y}(y) = 
\begin{cases}
\frac{1}{2^{u}\Gamma(u)}y^{u-1}e^{-\frac{y}{2}} \,\, \qquad \,\, \qquad \,\, \qquad \,\,\,\,\, \,\,\,\,:H_{0} \\ 
\frac{1}{2}\left(\frac{y}{2\gamma} \right)^{\frac{u-1}{2}} e^{-\frac{y + 2\gamma}{2}}I_{u-1}\left(\sqrt{2y\gamma} \right) \, \, \,\,\,\,\,\,\,\,:H_{1}
\end{cases}
\end{equation}
where $\gamma \triangleq   |h| ^{2} E_{s}{/}N_{0}$ is the corresponding instantaneous SNR, $E_{s}$ is the signal energy and $\Gamma\left(. \right)$, $I_{n}\left( . \right)$ denote the  gamma function and the modified Bessel function of the first kind, respectively, \cite{B:Prudnikov, B:Tables_1, B:Tables_2, B:Ryzhik}.

As already mentioned, an energy detector is largely characterized by a predefined energy threshold, $\lambda$. This threshold is particularly critical in the decision process and is promptly associated to three measures that overall evaluate the performance of the detector: i) the probability of false alarm, $P_{f}=Pr(y> \lambda \mid H_{0})$;  ii) the probability of detection, $P_{d}=Pr(y> \lambda \mid H_{1})$ and iii) the probability of missed detection, $P_{m} = 1 - P_{d}$. The first two measures are deduced by integrating \eqref{Test_3} over the interval between the energy threshold to infinity, $ \left\lbrace   \lambda, \, \infty  \right\rbrace $, yielding \cite{J:Alouini}, 
\begin{equation} \label{Pf_1} 
P_{f} = \frac{\Gamma \left(u,\frac{\lambda}{2}\right)}{\Gamma(u)} 
\end{equation}
\noindent 
and
\begin{equation}\label{Pd_1} 
P_{d} = Q_{u}(\sqrt{2 \gamma},\sqrt{\lambda})
\end{equation}
\noindent 
where $Q_{m}(a,b)$ and $\Gamma\left(., .\right)$   denote the  generalized Marcum${-}Q$ function and the upper incomplete gamma function, respectively \cite{J:Marcum_1, B:Ryzhik}.

\subsection{AUC and CAUC Performance Metrics }

The area under the ROC curve is a single parameter measure that can be used to evaluate the overall performance of energy detectors and therefore of spectrum sensing in cognitive radio and radar systems.  The AUC is defined as,
\begin{equation} \label{AUC_1}
{\rm AUC} = A(\gamma) \triangleq \int_{0}^{1} P_{d}(\gamma, \lambda) dP_{f}(\lambda)
\end{equation}
and can be equivalently expressed as follows,
\begin{equation} \label{AUC_2} 
{\rm AUC} = A(\gamma) = - \int_{0}^{\infty} P_{d}(\gamma, \lambda) \frac{\partial P_{f}(\lambda)}{\partial \lambda} d \lambda. 
\end{equation}
For the case that $P_{f}(\lambda)$ and $P_{d}(\gamma, \lambda)$ follow the central chi-square  and the non central chi-square distributions, respectively, the AUC can be expressed by \cite[eq. (9)]{Tellambura_AUC_3}, namely, 
\begin{equation} \label{AUC_3}
A(\gamma) = 1 - \sum_{l = 0}^{u - 1} \frac{\gamma^{l} e^{- \frac{\gamma}{2}}}{l! 2^{l}} + \sum_{l = 1 - u}^{u - 1} \frac{(u)_{l}\,_{1}F_{1}\left( u + l, 1 + l, \frac{\gamma}{2} \right)}{l! \, 2^{u + l } }
\end{equation}
where 
\begin{equation*}
(a)_{n} \triangleq \frac{\Gamma(a + n)}{\Gamma(a)}
\end{equation*}
denotes  the Pochhammer symbol \cite{B:Tables_1}. In the same context, the complementary area under the ROC curve is given by, 
\begin{equation} \label{CAUC_1}
{\rm CAUC} = A'(\gamma) = \int_{1}^{0} P_{m}(\gamma, \lambda) dP_{f}(\lambda).  
\end{equation}
Equation \eqref{CAUC_1} can be re-written as follows, 
\begin{equation}
\int_{1}^{0} P_{m}(\gamma, \lambda) dP_{f}(\lambda) = - \int_{0}^{1} P_{m}(\gamma, \lambda) dP_{f}(\lambda)
\end{equation}
which can be equivalently expressed as, 
\begin{equation}
  \int_{0}^{1} \left[ 1 - P_{d}(\gamma, \lambda) \right]  dP_{f}(\lambda)  = \int_{0}^{1}  dP_{f}(\lambda) -    \int_{0}^{1}   P_{d}(\gamma, \lambda)  dP_{f}(\lambda) 
\end{equation}
Based on this and with the aid of \eqref{AUC_1} and \eqref{AUC_2}, it follows that $ {\rm CAUC} = 1 - {\rm AUC}$, i.e. 
\begin{equation} \label{CAUC_2}
A'(\gamma) = 1 - A(\gamma). 
\end{equation}
 A closed-form expression for CAUC for integer values of $u$ is deduced by substituting \eqref{AUC_3} in \eqref{CAUC_2} yielding \cite[eq. (7)]{Tellambura_AUC_2},
\begin{equation} \label{CAUC_3}
A'(\gamma) =  \sum_{l = 0}^{u - 1} \frac{\gamma^{l} e^{- \frac{\gamma}{2}}}{l! 2^{l}}  - \sum_{l = 1 - u}^{u - 1} \frac{(u)_{l}\,_{1}F_{1}\left( u + l, 1 + l, \frac{\gamma}{2} \right)}{l! \, 2^{u + l } }. 
\end{equation}
It is recalled that  \eqref{AUC_3} and \eqref{CAUC_3} hold  for integer values of $u$. 

\subsection{The Hoyt (Nakagami${-}q$) Fading Distribution}

As already mentioned, the Hoyt  fading model has been shown to represent effectively the small-scale variations of an information signal in non-line-of-sight (NLOS) communication scenarios. Physically, this fading model  accounts for the case of enriched multipath fading with no presence of a dominant component. The  PDF of the instantaneous SNR in Hoyt fading channels is given by  \cite[eq. (2.11)]{B:Alouini}, namely, 
\begin{equation} \label{PDF_Hoyt} 
p_{\gamma}(\gamma) = \frac{1 + q^{2}}{2q \overline{\gamma}} e^{- \frac{(1 + q^{2})^{2} \gamma}{4 q^{2}\overline{\gamma}}} I_{0}\left( \frac{(1 - q^{4}) \gamma}{4 q^{2} \overline{\gamma}} \right) 
\end{equation}
where $\overline{\gamma}$ is the average SNR and $q$ denotes the   Nakagami${-}q$ fading parameter, which  is valid in the range $0 < q \leq 1$.  Likewise, the CDF  of the Hoyt model is given by \cite[eq. (9)]{J:Paris_1}
\begin{equation} \label{CDF}
\begin{split}
P_{\gamma}(\gamma) &= Q_{1}\left( \sqrt{ \frac{(1 - q^{4}) (1 - q) \gamma}{8q (1+q) \overline{\gamma} }}, \sqrt{\frac{(1 + q^{4}) (1 + q) \gamma}{8q (1-q) \overline{\gamma} }}\right)   \\
&- Q_{1}\left( \sqrt{\frac{(1 + q^{4}) (1 + q) \gamma}{8q (1-q) \overline{\gamma} }}, \sqrt{\frac{(1 - q^{4}) (1 - q) \gamma}{8q (1+q) \overline{\gamma} }}\right) 
\end{split}
\end{equation}
and the corresponding MGF is expressed as \cite[eq. (2.12)]{B:Alouini},
\begin{equation} \label{MGF_Hoyt}
\begin{split}
M_{\gamma}(s)  = \frac{1}{\sqrt{1 - 2s \overline{\gamma} + \left(  \frac{2 s \overline{\gamma} q}{1 + q^{2}} \right)^{2} }}  
\end{split}
\end{equation} 
in a simple form that involves only elementary function. 

\section{AUC and CAUC over Hoyt Fading Channels }

\subsection{AUC for the Special Case that $u$ is a Real Positive Integer}

\begin{theorem}
For $\overline{\gamma} \in \mathbb{R}^{+}$, $u \in \mathbb{N}$ and $0 < q \leq 1$, the following closed-form expression is valid for the AUC over Hoyt (Nakagami${-}q$) fading channels,  
\begin{equation} \label{AUC_Hoyt_1}
\overline{A}_{\rm Hoyt} = 1 - \sum_{l = 0}^{u - 1} \sum_{i = 0}^{l} \binom{l + u - 1}{l - i} \frac{q^{1 + 2i} 2^{i + 1 -  l - u} \, \overline{ \gamma}^{\, i}  }{\left(2 \overline{\gamma} q^{2} + (1 + q^{2})^{2} \right)^{i + 1 } }  \qquad \qquad \qquad
   \end{equation}
 \begin{equation*}
 \quad  \quad \qquad   \times \,_{2}F_{1}\left( \frac{i + 1}{2}, \frac{i}{2} + 1; 1; \frac{(1 - q^{4})^{2}}{\left[  (1 + q^{2})^{2}+ 2 q^{2} \overline{\gamma}\,  \right]^{2} }  \right)
\end{equation*}
where 
\begin{equation}
\binom{a}{b} \triangleq \frac{a!}{b! (a-b)!}
\end{equation}
denotes the binomial coefficient and  
\begin{equation}
\, _{2}F_{1}(a, b; c; x)  \triangleq = \sum_{l = 0}^{\infty} \frac{(a)_{l} (b)_{l}}{(c)_{l}} \frac{x^{l}}{l!}
\end{equation}
is the Gauss hypergeometric function \cite{B:Tables_1, B:Tables_2, B:Sofotasios, B:Ryzhik}.  
\end{theorem}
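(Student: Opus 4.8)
The plan is to obtain $\overline{A}_{\rm Hoyt}$ by statistically averaging the conditional (AWGN) AUC of \eqref{AUC_3} over the Hoyt distribution, i.e.\ to evaluate $\overline{A}_{\rm Hoyt}=\int_{0}^{\infty}A(\gamma)\,p_{\gamma}(\gamma)\,d\gamma$ with $p_{\gamma}$ taken from \eqref{PDF_Hoyt}. The two-sum form in \eqref{AUC_3} is awkward to integrate directly, since the second sum runs over negative indices and carries the SNR inside a ${}_{1}F_{1}$; so the first step I would take is to recast the conditional AUC as a single finite double sum of monomials weighted by $e^{-\gamma/2}$,
\begin{equation*}
A(\gamma) = 1 - \sum_{l=0}^{u-1}\sum_{i=0}^{l}\binom{l+u-1}{l-i}\frac{\gamma^{i}\,e^{-\gamma/2}}{i!\,2^{\,l+u+i}}.
\end{equation*}
This polynomial-times-exponential representation is the enabler, because every resulting term can then be averaged in closed form.

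Establishing this recasting is the first of the two principal obstacles. I would derive it either by inserting the series of the generalized Marcum-$Q$ function defining $P_{d}$ in \eqref{Pd_1} into \eqref{AUC_2} together with $\partial P_{f}/\partial\lambda$ from \eqref{Pf_1} and collecting the finite $\lambda$-integrals, or by showing via a Kummer/Pochhammer manipulation that the two sums of \eqref{AUC_3} coincide with the displayed double sum. A convenient check that fixes the constants is the zero-SNR value $A(0)=\tfrac12$, which here reduces to the binomial identity $\sum_{l=0}^{u-1}\binom{l+u-1}{l}2^{-(l+u)}=\tfrac12$.

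With the double sum in hand I would interchange the finite summation with the integral (trivially justified) so that the only non-elementary quantity is
\begin{equation*}
J_{i}\triangleq\int_{0}^{\infty}\gamma^{i}e^{-\gamma/2}p_{\gamma}(\gamma)\,d\gamma = \frac{1+q^{2}}{2q\overline{\gamma}}\int_{0}^{\infty}\gamma^{i}e^{-p\gamma}I_{0}(c\gamma)\,d\gamma,
\end{equation*}
where, after merging the two exponentials, $p=\tfrac12+\tfrac{(1+q^{2})^{2}}{4q^{2}\overline{\gamma}}$ and $c=\tfrac{1-q^{4}}{4q^{2}\overline{\gamma}}$ (note $p>c$, so convergence is ensured). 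Evaluating this Laplace--Bessel integral is the second and main obstacle: I would expand $I_{0}(c\gamma)=\sum_{k\ge0}(c\gamma/2)^{2k}/(k!)^{2}$, integrate termwise using $\int_{0}^{\infty}\gamma^{i+2k}e^{-p\gamma}d\gamma=(i+2k)!/p^{\,i+2k+1}$, and re-sum via the Legendre duplication $(i+1)_{2k}=4^{k}(\tfrac{i+1}{2})_{k}(\tfrac{i}{2}+1)_{k}$ to reach $\int_{0}^{\infty}\gamma^{i}e^{-p\gamma}I_{0}(c\gamma)\,d\gamma=\tfrac{i!}{p^{\,i+1}}\,{}_{2}F_{1}(\tfrac{i+1}{2},\tfrac{i}{2}+1;1;c^{2}/p^{2})$. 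One then verifies $c^{2}/p^{2}=(1-q^{4})^{2}/[(1+q^{2})^{2}+2q^{2}\overline{\gamma}]^{2}$, which is exactly the hypergeometric argument in the statement.

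Finally I would substitute $J_{i}$ back into the double sum and perform the elementary bookkeeping. Writing $p^{-(i+1)}=(4q^{2}\overline{\gamma})^{i+1}/[\,2q^{2}\overline{\gamma}+(1+q^{2})^{2}\,]^{i+1}$ and collecting the powers of $2$, $q$ and $\overline{\gamma}$ against the $1/(i!\,2^{\,l+u+i})$ weight reduces each averaged term to the binomial-weighted rational coefficient multiplying the Gauss hypergeometric factor of \eqref{AUC_Hoyt_1}. I expect the closed-form evaluation of $J_{i}$ and the combinatorial recasting of \eqref{AUC_3} to be the delicate parts, while the sum--integral interchange and the final collection of constants are routine; the $\overline{\gamma}\to0$ limit giving $\overline{A}_{\rm Hoyt}\to\tfrac12$ is a useful final sanity check on the prefactor.
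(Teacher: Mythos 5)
Your proposal is correct and takes essentially the same route as the paper: the paper writes the conditional AUC via generalized Laguerre polynomials, expands them into exactly your finite double sum (its \eqref{AUC_Hoyt_3b}), and evaluates the same Laplace--Bessel integral by invoking \cite[eq. (6.621.1)]{B:Ryzhik}, which is precisely the $\,_{2}F_{1}$ closed form you rederive by series expansion of $I_{0}$ and Legendre duplication. The only difference is that you derive from first principles what the paper imports from \cite{Annamalai_2} and the integral tables; incidentally, your bookkeeping (correctly) retains a factor $(1+q^{2})$ in the numerator that is present in the paper's intermediate step and in its fractional-$u$ result but appears to have been dropped from the printed \eqref{AUC_Hoyt_1}.
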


\begin{proof}
When $u \in \mathbb{N}$, the AUC can be expressed according to \eqref{AUC_3}. Importantly, it can be also expressed in terms of the generalized Laguerre polynomial, $L_{n}^{a}(x)$, as follows  \cite{Annamalai_2}, 
\begin{equation} \label{AUC_Hoyt_2}
A(\gamma) = 1 - \sum_{l = 0}^{u - 1} \frac{L^{u}_{l} \left( - \frac{\gamma}{2} \right)}{2^{l + u} e^{\frac{\gamma}{2}} }. 
\end{equation}
As a result, the average AUC over fading channels can be expressed as,
\begin{equation} \label{AUC_Hoyt_3a}
\overline{A} = 1 - \sum_{l = 0}^{u - 1} \frac{1}{2^{l + u}} \int_{0}^{\infty} L_{l}^{u} \left(- \frac{\gamma}{2} \right) e^{- \frac{\gamma}{2}} p_{\gamma}(\gamma) d \gamma 
\end{equation}
where $p_{\gamma}(\gamma)$ denotes the SNR PDF of the fading statistics and $\int_{0}^{\infty} p_{\gamma}(\gamma) d \gamma \triangleq  1$. For the case of Hoyt fading channels, one needs to substitute \eqref{PDF_Hoyt} in \eqref{AUC_Hoyt_3a}. To this effect and by expanding $L_{n}^{a}(x)$  according to \cite[eq. (8.970.1)]{B:Ryzhik} one obtains,  
\begin{equation} \label{AUC_Hoyt_3b}
\overline{A}_{\rm Hoyt}  = 1  -  \sum_{l = 0}^{u - 1}  \sum_{i = 0}^{l}  \binom{l + u - 1}{l - i}  \frac{1 + q^{2}}{i! q \overline{\gamma} 2^{l + i + u + 1}}  \qquad \qquad \qquad 
\end{equation} 
\begin{equation*}
\qquad \qquad  \quad \times  \int_{0}^{\infty}    \gamma^{i}  e^{- \gamma \left( \frac{(1 + q^{2})^{2}  }{4 q^{2}\overline{\gamma} }+ \frac{1}{2} \right)}    I_{0}\left( \frac{(1 - q^{4}) \gamma}{4 q^{2} \overline{\gamma}} \right)   d \gamma.  
\end{equation*}
Notably, the class of  $\int_{0}^{\infty} x^{a} {\rm exp}(-bx)I_{0}(cx) dx$ integrals can be expressed in closed-form with the aid of \cite[eq. (6.621.1)]{B:Ryzhik} and \cite[eq. (8.406.3)]{B:Ryzhik}. Therefore, by performing the necessary change of variables and substituting in \eqref{AUC_Hoyt_3b}, equation  \eqref{AUC_Hoyt_1} is deduced, which completes the proof. 
\end{proof}

\subsection{AUC for the Case that $u$ is an Arbitrary Positive Real}

\begin{theorem}
For $\overline{\gamma}, u \in \mathbb{R}^{+}$  and $0 < q \leq 1$, the following closed-form expression is valid for the AUC over Hoyt (Nakagami${-}q$) fading channels,  
\begin{equation} \label{AUC_Hoyt_4}
\begin{split}
\overline{A}_{\rm Hoyt}  = & \sum_{l = 0}^{\infty} \frac{q^{1 + 2l}   (1 + q^{2}) \overline{\gamma} \, (l + u)_{u} \,_{2}F_{1} \left( 1, l + 2u; 1 + u; \frac{1}{2} \right) }{u! 2^{2u - l - 1} \left( 4 \overline{\gamma} q^{2} + (1 + q^{2})^{2} \right)^{l + 1}  }   \\
& \times \, _{2}F_{1}\left( \frac{l + 1}{2}, \frac{l}{2} + 1; 1; \frac{(1 - q^{4})^{2}}{\left(4 \overline{\gamma} q^{2} + (1 + q^{2})^{2} \right)^{2}} \right). 
\end{split}
\end{equation}
\end{theorem}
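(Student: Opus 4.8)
The plan is to abandon the finite Laguerre-polynomial representation \eqref{AUC_Hoyt_2} used for $u \in \mathbb{N}$, since that truncated form is unavailable for arbitrary real $u$, and to work directly from the defining integral \eqref{AUC_2} together with a series representation of $P_{d}$ that survives for non-integer order. First I would substitute $\partial P_{f}/\partial \lambda = -(\lambda/2)^{u-1}e^{-\lambda/2}/(2\Gamma(u))$, obtained by differentiating \eqref{Pf_1}, and expand the generalized Marcum$-Q$ function in \eqref{Pd_1} through its canonical Neumann series $Q_{u}(\sqrt{2\gamma},\sqrt{\lambda}) = e^{-\gamma}\sum_{k=0}^{\infty}(\gamma^{k}/k!)\,\Gamma(u+k,\lambda/2)/\Gamma(u+k)$, which holds for every real $u>0$. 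This turns $A(\gamma)$ into a single $\lambda$-integral of the product of an upper incomplete gamma function and a gamma-type kernel.

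The first key step is to evaluate $\int_{0}^{\infty}(\lambda/2)^{u-1}e^{-\lambda/2}\,\Gamma(u+k,\lambda/2)\,d\lambda$ in closed form. After the substitution $t=\lambda/2$ this is of the type $\int_{0}^{\infty}t^{\mu-1}e^{-\beta t}\Gamma(\nu,\alpha t)\,dt$ with $\mu=u$, $\nu=u+k$ and $\alpha=\beta=1$, for which \cite[eq. (6.455.2)]{B:Ryzhik} returns a Gauss hypergeometric value with argument $\beta/(\alpha+\beta)=1/2$. This is precisely the origin of the factor ${}_{2}F_{1}(1,l+2u;1+u;1/2)$ in \eqref{AUC_Hoyt_4}, and it leaves a series $A(\gamma)=e^{-\gamma}\sum_{k}c_{k}(u)\,\gamma^{k}$ whose coefficient $c_{k}(u)$ packages the Pochhammer ratio $(k+u)_{u}=\Gamma(2u+k)/\Gamma(u+k)$, the power $2^{-(2u+k)}$ and that hypergeometric factor.

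Next I would average this series over the Hoyt statistics, $\overline{A}_{\rm Hoyt}=\int_{0}^{\infty}A(\gamma)p_{\gamma}(\gamma)\,d\gamma$, by substituting \eqref{PDF_Hoyt} and interchanging summation and integration. Crucially, the exponential $e^{-\gamma}$ inherited from the Marcum$-Q$ series (rather than the $e^{-\gamma/2}$ that governs the integer-order proof) combines with the Hoyt exponent to give the decay coefficient $b=1+(1+q^{2})^{2}/(4q^{2}\overline{\gamma})$, which is exactly why the denominator in \eqref{AUC_Hoyt_4} carries $4\overline{\gamma}q^{2}+(1+q^{2})^{2}$ in place of the $2\overline{\gamma}q^{2}+(1+q^{2})^{2}$ of Theorem 1. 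The surviving integral is then $\int_{0}^{\infty}\gamma^{k}e^{-b\gamma}I_{0}(c\gamma)\,d\gamma$ with $c=(1-q^{4})/(4q^{2}\overline{\gamma})$, i.e. the very same $\int_{0}^{\infty}x^{a}e^{-bx}I_{0}(cx)\,dx$ family already disposed of in Theorem 1 through \cite[eq. (6.621.1)]{B:Ryzhik} and \cite[eq. (8.406.3)]{B:Ryzhik}; it produces the second factor ${}_{2}F_{1}(\tfrac{l+1}{2},\tfrac{l}{2}+1;1;(1-q^{4})^{2}/(4\overline{\gamma}q^{2}+(1+q^{2})^{2})^{2})$ together with the stated prefactor. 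Collecting the constants and relabelling $k\to l$ then delivers \eqref{AUC_Hoyt_4}.

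I expect the principal obstacle to be the $\lambda$-integral of the incomplete gamma function against the $(\lambda/2)^{u-1}e^{-\lambda/2}$ weight: identifying the correct hypergeometric reduction, and, more delicately, justifying that the Neumann series for $Q_{u}$ may be integrated term by term in $\lambda$ and that the resulting series for $A(\gamma)$ may in turn be exchanged with the fading average. Establishing the dominated/uniform convergence underwriting these two interchanges — rather than any individual integral, each of which is a tabulated identity — is the step that requires genuine care, and it is also where the non-integer nature of $u$ makes the argument more demanding than in Theorem 1, whose finite sum sidesteps the convergence question entirely.
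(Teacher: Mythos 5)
Your proposal is correct and follows essentially the same route as the paper: both hinge on the infinite-series representation $A(\gamma)=\sum_{l}\frac{\gamma^{l}(l+u)_{u}}{l!\,u!\,2^{l+2u}e^{\gamma}}\,_{2}F_{1}\bigl(1,l+2u;1+u;\tfrac{1}{2}\bigr)$ followed by term-wise averaging against the Hoyt PDF using the same tabulated $\int_{0}^{\infty}x^{a}e^{-bx}I_{0}(cx)\,dx$ identity, which produces the second $_{2}F_{1}$ factor and the $4\overline{\gamma}q^{2}+(1+q^{2})^{2}$ denominator exactly as you describe. The only difference is that the paper simply cites this series from \cite{Annamalai_3}, whereas you re-derive it from the Neumann expansion of $Q_{u}$ and \cite[eq.\ (6.455.2)]{B:Ryzhik} — a correct and more self-contained supplement, and your remark that the term-by-term interchanges deserve justification is a fair point that the paper leaves implicit.
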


\begin{proof}
When the value of the time-bandwidth product is   arbitrary real, the AUC can be expressed as follows \cite{Annamalai_3},
\begin{equation} \label{AUC_Hoyt_5}
A(\gamma) = \sum_{l = 0}^{\infty} \frac{ \gamma^{i} ( l + u)_{u} }{l! u! 2^{l + 2u} e^{\gamma} } \, _{2}F_{1}\left( 1, l + 2u; 1 + u; \frac{1}{2} \right). 
\end{equation}
Thus, by  averaging \eqref{AUC_Hoyt_5} over the fading statistics in \eqref{PDF_Hoyt} yields, 
\begin{equation} \label{AUC_Hoyt_6}
\overline{A}_{\rm Hoyt}  =\sum_{l = 0}^{\infty} \frac{ ( l + u)_{u} (1 + q^{2}) }{l! u! 2^{l + 2u + 1} q \overline{\gamma}   } \, _{2}F_{1}\left( 1, l + 2u; 1 + u; \frac{1}{2} \right)   \qquad \qquad \qquad 
\end{equation}
\begin{equation*}
\quad \times \int_{0}^{\infty}  \gamma^{i}  e^{- \gamma \left(1 +  \frac{(1 + q^{2})^{2}  }{4 q^{2}\overline{\gamma}}\right)} I_{0}\left( \frac{(1 - q^{4}) \gamma}{4 q^{2} \overline{\gamma}} \right)  d \gamma.  
\end{equation*}
The integral in \eqref{AUC_Hoyt_6} has the same algebraic form as the integral in \eqref{AUC_Hoyt_3b}. Therefore, by making the necessary change of variables in \cite[eq. (6.621.1)]{B:Ryzhik} and \cite[eq. (8.406.3)]{B:Ryzhik}, substituting in \eqref{AUC_Hoyt_6} and carrying out some algebraic manipulations, one obtains  \eqref{AUC_Hoyt_4} thus, completing the proof. 
\end{proof}

\subsection{CAUC for Positive Integer and Arbitrary Real Values of $u$}

As already mentioned, the CAUC is a complementary measure to the AUC.

\begin{corollary}
For   $\overline{\gamma}$, $0 < q \leq 1$ and for the cases that $u \in \mathbb{N}$, the following expressions hold for the CAUC over Hoyt (Nakagami${-}q$) fading channels,  
\begin{equation} \label{CAUC_Hoyt_1}
\overline{A'}_{\rm Hoyt} =  \sum_{l = 0}^{u - 1} \sum_{i = 0}^{l} \binom{l + u - 1}{l - i} \frac{q^{1 + 2i} 2^{i + 1 -  l - u} \, \overline{ \gamma}^{\, i}  }{\left(2 \overline{\gamma} q^{2} + (1 + q^{2})^{2} \right)^{i + 1 } }  \qquad \qquad \qquad
   \end{equation}
 \begin{equation*}
\qquad \qquad  \quad \times \,_{2}F_{1}\left( \frac{i + 1}{2}, \frac{i}{2} + 1; 1; \frac{(1 - q^{4})^{2}}{\left[  (1 + q^{2})^{2}+ 2 q^{2} \overline{\gamma}\,  \right]^{2} }  \right)
\end{equation*}
\end{corollary}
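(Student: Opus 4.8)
The plan is to invoke the complementary relationship established earlier in the excerpt rather than to recompute any integrals. Specifically, equation \eqref{CAUC_2} already establishes the clean identity $A'(\gamma) = 1 - A(\gamma)$, derived from the definitions of AUC and CAUC and valid pointwise in the instantaneous SNR $\gamma$. First I would observe that this pointwise identity survives averaging over the fading statistics: integrating both sides against the Hoyt PDF $p_{\gamma}(\gamma)$ from \eqref{PDF_Hoyt} and using the normalization $\int_{0}^{\infty} p_{\gamma}(\gamma)\,d\gamma = 1$ gives
\begin{equation*}
\overline{A'}_{\rm Hoyt} = \int_{0}^{\infty} A'(\gamma)\, p_{\gamma}(\gamma)\, d\gamma = \int_{0}^{\infty} \bigl[1 - A(\gamma)\bigr] p_{\gamma}(\gamma)\, d\gamma = 1 - \overline{A}_{\rm Hoyt}.
\end{equation*}
Thus the averaged complementary relation $\overline{A'}_{\rm Hoyt} = 1 - \overline{A}_{\rm Hoyt}$ holds for the same parameter ranges as Theorem~1.

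Next I would simply substitute the closed-form expression for $\overline{A}_{\rm Hoyt}$ from \eqref{AUC_Hoyt_1}, which is valid precisely when $u \in \mathbb{N}$. Writing $\overline{A}_{\rm Hoyt} = 1 - S$, where $S$ denotes the double sum over $l$ and $i$ together with its accompanying Gauss hypergeometric factor, the subtraction $1 - \overline{A}_{\rm Hoyt}$ cancels the leading $1$ and leaves exactly $+S$. This reproduces \eqref{CAUC_Hoyt_1} term for term, with the same binomial coefficient $\binom{l+u-1}{l-i}$, the same prefactor $q^{1+2i} 2^{i+1-l-u}\,\overline{\gamma}^{\,i} / \left(2\overline{\gamma}q^{2} + (1+q^{2})^{2}\right)^{i+1}$, and the same $\,_{2}F_{1}$ argument. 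The only change is the removal of the leading unity and the reversal of the overall sign on the double sum.

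There is essentially no analytic obstacle here, since all the hard work, namely the evaluation of the Bessel-weighted integral via \cite[eq. (6.621.1)]{B:Ryzhik} and \cite[eq. (8.406.3)]{B:Ryzhik}, was already carried out in the proof of Theorem~1. The mildest point of care is simply to confirm that the averaging operation legitimately distributes across the constant term and the series, which is immediate from linearity of the integral and the fact that $A(\gamma)$ is a finite sum of bounded terms for integer $u$ so that term-by-term integration is justified. Substituting \eqref{AUC_Hoyt_1} into the averaged complementary identity therefore yields \eqref{CAUC_Hoyt_1} directly, which completes the proof.
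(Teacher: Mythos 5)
Your proof is correct and takes essentially the same route as the paper: average the pointwise identity $A'(\gamma) = 1 - A(\gamma)$ over the Hoyt PDF using its unit normalization to get $\overline{A'}_{\rm Hoyt} = 1 - \overline{A}_{\rm Hoyt}$, then substitute the closed form of $\overline{A}_{\rm Hoyt}$ from Theorem~1 so that the leading unity cancels and the double sum survives with positive sign. Your write-up is in fact slightly more careful than the paper's, which contains a typo (stating $\overline{A'}_{\rm Hoyt} = 1 - \overline{A'}_{\rm Hoyt}$ rather than $1 - \overline{A}_{\rm Hoyt}$) and unnecessarily cites the fractional-$u$ expression alongside the integer-$u$ one.
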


\begin{proof}
Given that $A'(\gamma) = 1 - A(\gamma)$,  it follows that $\overline{A'}_{\rm Hoyt} = 1 - \overline{A'}_{\rm Hoyt}$. Based on this, the proof is immediately completed by   substituting   \eqref{AUC_Hoyt_1} and \eqref{AUC_Hoyt_4}.  
\end{proof}
In the same context, an analytic expression can be derived for the case of arbitrary value of $u$. 
\begin{corollary}
For   $\overline{\gamma}$, $0 < q \leq 1$ and for the cases that  $ u \in \mathbb{R}^{+}$, the following expressions hold for the CAUC over Hoyt (Nakagami${-}q$) fading channels,  
\begin{equation} \label{CAUC_Hoyt_2}
\overline{A'}_{\rm Hoyt}  =  1 -  \sum_{l = 0}^{\infty} \frac{  (1 + q^{2}) \overline{\gamma} \, (l + u)_{u} \,_{2}F_{1} \left( 1, l + 2u; 1 + u; \frac{1}{2} \right) }{u! 2^{2u - l - 1} q^{- 1 - 2l}   \left( 4 \overline{\gamma} q^{2} + (1 + q^{2})^{2} \right)^{l + 1}  }    \qquad \qquad \qquad 
\end{equation}
\begin{equation*}  
\qquad  \qquad  \quad  \times \, _{2}F_{1}\left( \frac{l + 1}{2}, \frac{l}{2} + 1; 1; \frac{(1 - q^{4})^{2}}{\left(4 \overline{\gamma} q^{2} + (1 + q^{2})^{2} \right)^{2} } \right). 
\end{equation*} 
\end{corollary}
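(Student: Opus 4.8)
The plan is to exploit the complementary relation $A'(\gamma) = 1 - A(\gamma)$ established in \eqref{CAUC_2}, which holds pointwise for every instantaneous SNR $\gamma$. Since this identity is algebraic and holds $\gamma$-wise, I would simply average both sides over the Hoyt fading statistics in \eqref{PDF_Hoyt}. Using the normalization $\int_{0}^{\infty} p_{\gamma}(\gamma)\,d\gamma = 1$, the constant term integrates to itself, so that averaging yields $\overline{A'}_{\rm Hoyt} = 1 - \overline{A}_{\rm Hoyt}$, where $\overline{A}_{\rm Hoyt}$ is precisely the arbitrary-real-$u$ AUC already derived in Theorem 2.

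Next I would substitute the closed form \eqref{AUC_Hoyt_4} directly into $1 - \overline{A}_{\rm Hoyt}$. The resulting expression matches \eqref{CAUC_Hoyt_2} term by term, the only cosmetic manipulation being the rewriting of the factor $q^{1 + 2l}$ that appears in the numerator of \eqref{AUC_Hoyt_4} as $q^{-1-2l}$ in the denominator of \eqref{CAUC_Hoyt_2}; since $0 < q \leq 1$ this is a legitimate and invertible rearrangement. No new special-function evaluation is required, because the two Gauss hypergeometric factors and the Pochhammer symbol $(l + u)_{u}$ are carried over unchanged from Theorem 2. This is exactly why the integer and real-$u$ corollaries look structurally different: in \eqref{AUC_Hoyt_1} the AUC itself carries a leading $1-$, so its complement is a bare sum as in \eqref{CAUC_Hoyt_1}, whereas \eqref{AUC_Hoyt_4} is a bare sum and hence its complement acquires the leading $1-$ seen in \eqref{CAUC_Hoyt_2}.

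Because the heavy lifting, namely the $\int_{0}^{\infty} \gamma^{i} e^{-b\gamma} I_{0}(c\gamma)\,d\gamma$ integral and its reduction through \cite[eq. (6.621.1)]{B:Ryzhik} and \cite[eq. (8.406.3)]{B:Ryzhik}, was already discharged in the proof of Theorem 2, I do not expect any genuine obstacle here; the statement is a direct corollary. The one technical point I would be careful to justify is the interchange of the infinite summation over $l$ in \eqref{AUC_Hoyt_5} with the averaging integral, i.e.\ that $\overline{A}_{\rm Hoyt}$ may legitimately be evaluated term-by-term. This is permissible because the AUC series \eqref{AUC_Hoyt_5} converges absolutely and uniformly on compact $\gamma$-sets while $p_{\gamma}(\gamma)$ is integrable, so dominated convergence allows swapping $\sum_{l}$ and $\int_{0}^{\infty}$, exactly as was already done implicitly in passing from \eqref{AUC_Hoyt_5} to \eqref{AUC_Hoyt_6}. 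Carrying out this interchange together with the substitution above then yields \eqref{CAUC_Hoyt_2} and completes the argument.
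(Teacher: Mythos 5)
Your proposal is correct and follows essentially the same route as the paper: the paper's own (one-line) proof also just invokes the pointwise relation $A'(\gamma) = 1 - A(\gamma)$, averages it over the Hoyt PDF to get $\overline{A'}_{\rm Hoyt} = 1 - \overline{A}_{\rm Hoyt}$, and substitutes the arbitrary-real-$u$ expression \eqref{AUC_Hoyt_4} from Theorem 2. Your write-up is in fact more careful than the paper's, correctly noting the $q^{1+2l}\mapsto q^{-1-2l}$ rearrangement and the reason the integer and real-$u$ corollaries differ by a leading $1-{}$, whereas the paper's Corollary 1 proof even contains the typo $\overline{A'}_{\rm Hoyt} = 1 - \overline{A'}_{\rm Hoyt}$.
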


\begin{proof}
The proof follows immediately by Corollary $1$. 
\end{proof}
To the best of the authors knowledge,  equations \eqref{AUC_Hoyt_1}, \eqref{AUC_Hoyt_4},  \eqref{CAUC_Hoyt_1} and \eqref{CAUC_Hoyt_2} have not been previously reported in the open technical literature.

\section{Numerical Results} 

Having derived new analytic expressions for  the AUC and CAUC measures, this section is devoted to  the  analysis 
of the respective behaviour of energy detection-based spectrum sensing  over Hoyt fading channels. The corresponding performance is evaluated for different communication scenarios by means of  $\overline{A}_{\rm Hoyt}$ vs $\overline{\gamma}$ and  $\overline{A'}_{\rm Hoyt}$ vs $\overline{\gamma}$  curves.   To this end, 
\begin{figure}[h!]
\includegraphics[ width=9.25cm,height=6.475cm]{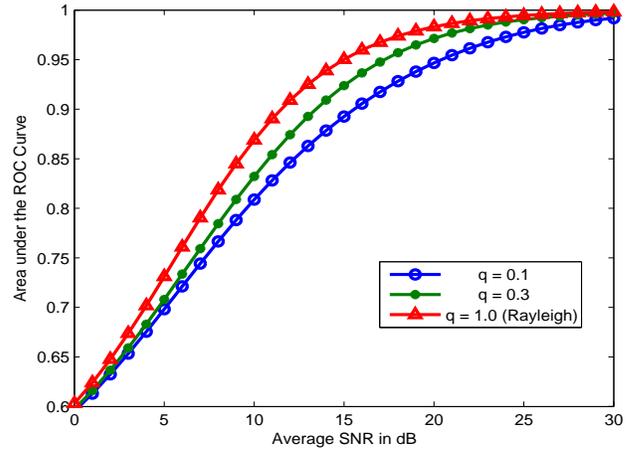} 
\caption{ $\overline{A}_{\rm Hoyt}$ vs $\overline{\gamma}$ for i.i.d Hoyt (Nakagami${-}q$) fading channels with $u = 5$ and different values of $q$. } 
\end{figure}
 Fig. $1$ demonstrates the average AUC  as a function of the average SNR for $u = 5$ and different values of the Nakagami${-}q$ fading parameter. The special case $q  = 1$, which corresponds to Rayleigh fading, is also depicted for comparison. Evidently, the performance of the detector is nearly excellent in the high SNR regime $\overline{\gamma} > 25$dB irrespective of the severity of fading. 
 \begin{figure}[h!]
\includegraphics[ width=9.25cm,height=6.475cm]{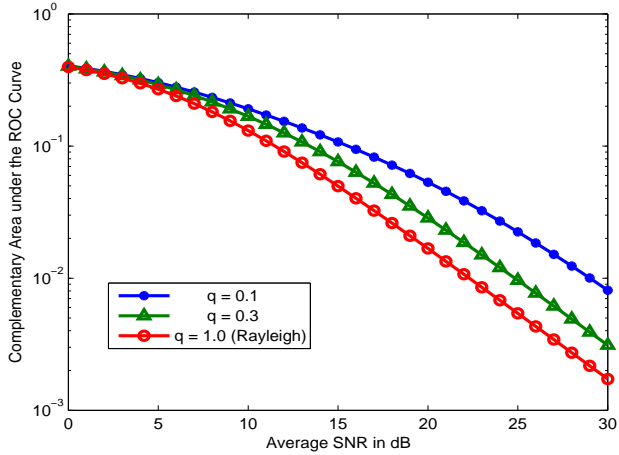} 
\caption{ $\overline{A'}_{\rm Hoyt}$ vs $\overline{\gamma}$ for i.i.d Hoyt (Nakagami${-}q$) fading channels with $u = 5$ and different values of $q$. } 
\end{figure} 
 On the contrary, the overall performance is rather poor in the low SNR regime $\overline{\gamma} < 0$dB, which indicates that the number of samples need to be increased.  Regarding the moderate SNR levels, it is shown that the value of $q$  affects the performance of the detector as, for example, the difference of $\overline{A}_{\rm Hoyt}$  between $q = 0.1$ and $q = 0.3$ is $6 \%$ for $\overline{\gamma} = 10$dB and $5 \%$  for $\overline{\gamma} = 20$dB.  
In the same context, Fig. $2$ illustrates the behaviour of the complementary area under ROC as a function of the average SNR. As in Fig. $1$, the overall performance of the behaviour increases as SNR increases while the effect of the value of $q$ is rather noticeable for moderate   SNR values. Indicatively,  the difference of $\overline{A}_{\rm Hoyt}$  between $q = 0.1$ and $q = 1.0$ is $80 \%$ for $\overline{\gamma} = 15$dB and $85 \%$ for $\overline{\gamma} = 25$dB.

\section{Conclusion}

This work analyzed the performance of energy detection over Hoyt (Nakagami${-}q$) fading channels. This was carried out by means of the area under the ROC curve and the complementary area under the ROC curve metrics which are single parameter performance measures. Novel analytic expression were  derived for the case of integer and arbitrary  values of the time bandwidth product. These expressions have a relatively simple algebraic form which renders them convenient to handle both analytically and numerically. To this end, they were employed in quantifying the effect of enriched multipath fading conditions in energy detection based spectrum sensing in the context of cognitive radio and radar systems and it was shown that the corresponding severity of fading affects the detector's performance, particularly for moderate SNR values.


\bibliographystyle{IEEEtran}
\thebibliography{99}

\bibitem{J:Marcum_1}   
J. I. Marcum, 
``A statistical theory of target detection by pulsed radar: Mathematical appendix," \emph{RAND Corp.}, Santa Monica, Research memorandum, CA, 1948.

\bibitem{J:Swerling}  
P. Swerling, 
``Probability of detection for fluctuating targets," 
\emph{IRE Trans. on Inf. Theory}, vol. IT-6, pp. 269${-}$308, April 1960.

\bibitem{B:Haykin} 
S. Haykin, M. Moher, 
 \emph{Modern Wireless Communications},
 Prentice-Hall, Inc. Upper Saddle River, NJ, USA, 2004.  

\bibitem{J:Haykin}  
S. Haykin,
``Cognitive radio: Brain-empowered wireless communications," 
\emph{IEEE J. Select. Areas Commun.} vol. 23, no. 2, pp. 201${-}$220, Feb. 2005.

\bibitem{B:Bargava} 
V. K. Bargava, E. Hossain,
\emph{Cognitive Wireless Communication Networks}, 
Springer-Verlag, Berlin, Heidelberg 2009.  

\bibitem{J:Urkowitz} 
H. Urkowitz,
``Energy detection of unknown deterministic signals," 
\emph{Proc. IEEE}, vol. 55, no. 4, pp. 523${-}$531, 1967.

\bibitem{C:Kostylev} 
V. I. Kostylev,
``Energy detection of signal with random amplitude,"
\emph{in Proc. IEEE Int. Conf. on Commu. (ICC '02)}, pp. 1606${-}$1610, May 2002.

\bibitem{J:Alouini} 
F. F. Digham, M. S. Alouini, and M. K. Simon,
``On the energy detection of unknown signals over fading channels," 
\emph{IEEE Trans. Commun.}, vol. 55, no. 1, pp. 21${-}$24, Jan. 2007.

\bibitem{C:Herath} 
S. P. Herath, N. Rajatheva,
``Analysis of equal gain combining in energy detection for cognitive radio over Nakagami channels,"
 \emph{in Proc. IEEE Global Telecomm. Conf. (Globecom '08)}, pp. 2972${-}$2976, Dec. 2008.

\bibitem{C:Ghasemi} 
A. Ghasemi, E.S. Sousa,
``Collaborative spectrum sensing for opportunistic access in fading environments,"
 \emph{in Proc. DySpan '05}, pp. 131${-}$136, Nov. 2005.

\bibitem{C:Sousa} 
A. Ghasemi, E.S. Sousa, 
``Impact of user collaboration on the performance of sensing-based opportunistic spectrum access,"
 \emph{in Proc. IEEE Vehicular Tech. Conf. (VTC-fall '06),} pp. 1${-}$6, Sep. 2006.

\bibitem{J:Ghasemi-Sousa} 
A. Ghasemi, E.S. Sousa,
``Asymptotic performance of collaborative spectrum sensing under correlated log-normal shadowing," 
\emph{IEEE Comm. Lett.,}  vol. 11, no. 1, pp. 34${-}$36, Jan. 2007.

\bibitem{C:Attapattu} 
S. Atapattu, C. Tellambura, H. Jiang,
``Relay based cooperative spectrum sensing  in cognitive radio networks,"
 \emph{in Proc. IEEE Global Telecomm. Conf. (Globecom '09)}, pp. 4310${-}$4314, Nov. 2009.

\bibitem{J:Attapattu_2} 
S. Atapattu, C. Tellambura, H. Jiang,
``Energy detection based cooperative spectrum sensing in cognitive radio networks,"
 \emph{IEEE Trans. Wireless Commun.}, vol. 10, no. 4, pp. 1232${-}$1241, Apr. 2011.

\bibitem{J:Herath} 
S. P. Herath, N. Rajatheva, C. Tellambura,
``Energy detection of unknown signals in fading and diversity reception,"
\emph{IEEE Trans. Commun.}, vol. 59, no. 9, pp. 2443${-}$2453, Sep. 2011.

\bibitem{C:Beaulieu} 
K. T. Hemachandra, N. C. Beaulieu,
``Novel analysis for performance evaluation of energy detection of unknown deterministic signals using dual diversity," 
\emph{in Proc. IEEE Vehicular Tech. Conf. (VTC-fall '11),}, pp. 1${-}$5, Sep. 2011.

\bibitem{C:Atapattu_3} 
S. Atapattu, C. Tellambura, H. Jiang,
``Spectrum sensing via energy detector in low SNR," 
\emph{in Proc. IEEE Int. Conf. Commun. (ICC '11}, pp. 1${-}$5, June 2011.

\bibitem{J:Janti} 
K. Ruttik, K. Koufos and R. Jantti, 
``Detection of unknown signals in a fading environment," 
\emph{IEEE Comm. Lett.}, vol. 13, no. 7, pp. 498${-}$500, July 2009.

\bibitem{J:Attapattu} 
S. Atapattu, C. Tellambura, H. Jiang,
``Performance of an energy detector over channels with both multipath fading and shadowing," 
\emph{IEEE Trans. Wireless Commun.}, vol. 9, no. 12, pp. 3662${-}$3670, Dec. 2010.

\bibitem{New_1}
 P. C. Sofotasios, S. Freear, 
 ``A Novel Representation for the Nuttall $Q{-}$Function," 
\emph{in Proc.  IEEE Int. Conf. in Wirel. Inf. Technol. and Systems (ICWITS '10)}, Honolulu, HI, USA, Aug. 2010. 

\bibitem{C:Attapattu_2} 
S. Atapattu, C. Tellambura, H. Jiang,
``Energy detection of primary signals over $\eta{-}\mu$ fading channels,"
\emph{in Proc. $4^{th}$ Ind. Inf. Systems (ICIIS '09)}, pp. 1${-}$5, Dec. 2009.

\bibitem{J:Sofotasios} 
P. C. Sofotasios, E. Rebeiz, L. Zhang, T. A. Tsiftsis, D. Cabric and S. Freear, 
``Energy detection-based spectrum sensing over $\kappa{-}\mu$ and $\kappa{-}\mu$ extreme fading channels,"
\emph{IEEE Trans. Veh. Technol.}, vol. 63, no 3, pp. 1031${-}$1040, Mar. 2013. 

\bibitem{New_2}
K. Ho-Van, P. C.  Sofotasios, 
``Exact bit-error-rate analysis of underlay decode-and-forward multi-hop cognitive networks with estimation errors,"
\emph{IET Communications}, vol. 7, no. 18, pp. 2122${-}$2132, Dec. 2013. 

 \bibitem{C:Tellambura_2} 
S. P. Herath, N. Rajatheva, C. Tellambura,
``On the energy detection of unknown deterministic signal over Nakagami channels with selection combining," 
\emph{in Proc. IEEE Canadian Conf. in Elec. and Comp. Eng. (CCECE '09)}, pp. 745${-}$749, May 2009.

\bibitem{New_3a} 
K. Ho-Van, P. C. Sofotasios, 
``Outage Behaviour of Cooperative Underlay Cognitive Networks with Inaccurate Channel Estimation,"
\emph{ in Proc. IEEE ICUFN '13}, pp. 501${-}$505, Da Nang, Vietnam, July 2013.

\bibitem{New_4}
K. Ho-Van, P. C.  Sofotasios, S. Freear,
``Underlay cooperative cognitive networks with imperfect Nakagami${-}m$ fading channel information and strict transmit power constraint: Interference statistics and outage probability analysis,"
\emph{IEEE/KICS Journal of Communication and Networks}, vol. 16, no. 1, pp. 10${-}$17, Feb. 2014. 

\bibitem{J:Ghasemi} 
A. Ghasemi, E. S. Sousa,
``Spectrum sensing in cognitive radio networks: Requirements, challenges and design trade-offs,"
\emph{IEEE Commun. Mag.}, pp. 32${-}$39, Apr. 2008.

\bibitem{New_3}
K. Ho-Van, P. C. Sofotasios, S. V. Que, T. D. Anh, T. P. Quang, L. P. Hong, 
``Analytic Performance Evaluation of Underlay Relay Cognitive Networks with Channel Estimation Errors,"
\emph{ in Proc. IEEE Int. Conf. on Advanced Technol. for Commun. (ATC '13)}, pp. 631${-}$636, HoChiMing City, Vietnam, Oct. 2013.

\bibitem{New_5}
P. C. Sofotasios, M. K. Fikadu, K. Ho-Van, M. Valkama, 
``Energy Detection Sensing of Unknown Signals over Weibull Fading Channels,"
\emph{ in Proc. IEEE Int. Conf. on Advanced Technol. for Commun. (ATC '13)}, pp. 414${-}$419, HoChiMing City, Vietnam, Oct. 2013.

\bibitem{New_3b} 
 K. Ho-Van, P. C. Sofotasios, 
 ``Bit Error Rate of Underlay Multi-hop Cognitive Networks in the Presence of Multipath Fading,"
 \emph{ in IEEE ICUFN '13}, pp. 620${-}$624, Da Nang, Vietnam, July 2013. 

\bibitem{New_6}
P. C. Sofotasios, M. Valkama, T. A. Tsiftsis, Yu. A. Brychkov, S. Freear, G. K.  Karagiannidis, 
``Analytic solutions to a Marcum $Q{-}$function-based integral and application in energy detection of unknown signals over multipath fading channels," 
\emph{in Proc. of 9$^{\rm th}$ CROWNCOM '14},  pp. 260${-}$265, Oulu, Finland, 2-4 June, 2014.

\bibitem{New_7}  
J. A. Hanley, B. J. McNeil, 
``The Meaning and Use of the Area under a Receiver Operating Characteristic (ROC) Curve,"
\emph{Journal of Radiology}, vol. 143, no.1, pp. 29${-}$36, Apr. 1982.

\bibitem{New_8}
D. Ciuonzo, G. Romano, and P. S. Rossi, 
``Performance Analysis and Design of Maximum Ratio Combining in Channel-Aware MIMO Decision Fusion,"
\emph{ IEEE Trans. Wirel. Commun.,} vol. 12, no. 9, pp. 4716${-}$4728, Sep. 2013.

\bibitem{Tellambura_AUC_3} 
S. Atapattu, C. Tellambura, and Hai Jiang, 
``Analysis of area under the ROC curve of energy detection,"
\emph{IEEE Trans. Wirel. Commun.}, vol. 9, no. 3, pp. 1216${-}$1225, Mar. 2010. 

 \bibitem{J:Tellambura_AUC} 
S. Atapattu, C. Tellambura, and H. Jiang,
``MGF based analysis of area under the ROC curve in energy detection,"
\emph{IEEE Commun. Lett.}, vol. 15, no. 12, Dec. 2011. 

\bibitem{Annamalai_3} 
S. Alam and A. Annamalai, 
``Energy detector's performance analysis over the wireless channels with composite multipath fading and shadowing effects using the AUC approach,"
\emph{in Proc. IEEE CCNC '12},  Las Vegas, NV, pp. 771${-}$775, Jan. 2012. 

\bibitem{B:Alouini} 
M. K. Simon and M.-S. Alouni,
\emph{Digital Communication over Fading Channels,} 
2nd Edition, New York: Wiley, 2005.

\bibitem{Annamalai_2} 
O. Olabiyi and A. Annamalai,
``Closed-form evaluation of area under the ROC of cooperative relay-based energy detection in cognitive radio networks," 
\emph{in Proc. IEEE ICNC '12,} Hawaii, pp. 1103${-}$1107, Jan. 2012.

\bibitem{Tellambura_AUC_2} 
S. Atapattu, C. Tellambura, and Hai Jiang,
``Performance of energy detection: A complementary AUC approach,"
\emph{in Proc. IEEE Globecom '10,} Miami, FL, pp. 1${-}5$, Dec. 2010. 

\newpage 

\bibitem{Yacoub_1}
G. Fraidenraich, J.C. S. Santos Filho, and M. D. Yacoub,
``Second-order statistics of maximal-ratio and equal-gain combining in Hoyt fading,"
\emph{IEEE Commun. Lett.},  vol. 9, no. 1, pp. 19${-}$21, Jan. 2005. 

\bibitem{Zogas}
D. A. Zogas,  G. K. Karagiannidis, and S. A. Kotsopoulos,
``Equal gain combining over Nakagami${-}n$ (Rice) and Nakagami${-}q$ (Hoyt)
generalized fading channels,"
\emph{IEEE Trans. Wirel. Commun.}, vol. 4, no. 2, pp. 374${-}$379, Mar. 2005. 

\bibitem{Iskander}
C-D. Iskander, and P. T. Mathiopoulos, 
``Exact performance analysis of dual-branch coherent equal-gain combining in Nakagami${-}m$, Rician, and Hoyt fading," 
\emph{IEEE Trans. Veh. Technol.}, vol. 57, no. 2, pp. 921${-}$931, Mar. 2008. 

\bibitem{Matalgah}
R. M. Radaydeh, and M. M. Matalgah, 
``Average BER analysis for M-ary FSK signals in Nakagami${-}q$ (Hoyt) fading with noncoherent diversity combining,"
\emph{IEEE Trans. Veh. Technol.} vol. 57, no. 4, pp. 2257${-}$2267, July 2008.

\bibitem{J:Paris_1}
``Nakagami${-}q$ (Hoyt) distribution function with applications,"
\emph{Electronic Letters}, vol. 45, no. 4, pp. 210${-}$211, Feb. 2009. 

\bibitem{J:Tavares}
G.N. Tavares,
``Efficient computation of Hoyt cumulative distribution function," 
\emph{Electronic Letters}, vol. 46, no. 7, pp. 537${-}$539, Apr. 2010. 

\bibitem{J:Paris_2} 
J. F. Paris, and David Morales-Jimenez,
``Outage probability analysis for Nakagami${-}q$ (Hoyt) fading channels under Rayleigh interference," 
\emph{IEEE Trans. Wirel. Commun.}, vol. 9, no. 4, Apr. 2010. 

\bibitem{J:Beaulieu}
K. T. Hemachandra, and N. C. Beaulieu,
``Simple expressions for the SER of dual MRC in correlated Nakagami${-}q$ (Hoyt) fading,"
\emph{IEEE Comm. Lett}, vol. 14, no. 8, pp. 743${-}$745, Aug. 2010. 

\bibitem{Yacoub_2}
R. A. A. de Souza, M. D. Yacoub, and G. S. Rabelo,
``Bivariate Hoyt (Nakagami${-}q$) distribution,"
\emph{IEEE Trans. Commun.}, vol. 60, no. 3, pp. 714${-}$723, Mar. 2012.

\bibitem{B:Prudnikov}  
A. P. Prudnikov, Yu. A. Brychkov, and O. I. Marichev, 
\emph{Integrals and Series}, 3rd ed. New York: Gordon and Breach Science, vol. 1, Elementary Functions, 1992.

\bibitem{B:Tables_1} 
A. P. Prudnikov, Yu. A. Brychkov, O. I. Marichev, 
\emph{Integrals and Series}, Gordon and Breach Science Publishers, vol. 2, Special Functions,  1992.

\bibitem{B:Tables_2} 
A. P. Prudnikov, Yu. A. Brychkov, O. I. Marichev, 
``Integrals and Series", \emph{Gordon and Breach Science Publ.}, vol. 3, More Special Functions,  1990.

\bibitem{B:Sofotasios} 
P. C. Sofotasios,
\emph{On special functions and composite statistical distributions and their applications in digital communications over fading channels}, Ph.D. Dissertation,  University of Leeds, UK, 2010. 

\bibitem{B:Ryzhik} 
I. S. Gradshteyn and I. M. Ryzhik, 
\emph{Table of Integrals, Series, and Products}, in $7^{th}$ ed.  Academic, New York, 2007.

\end{document}